\newcommand{\restrict}{\ensuremath{\upharpoonright}}
\renewcommand{\phi}{\varphi}
\newcommand{\ignore}[1]{}
\begin{document}

\title{On the Chen Conjecture regarding the complexity of QCSPs}
\author{Barnaby Martin\inst{1}}
\institute{
  School of Science and Technology, Middlesex University, \\
  The Burroughs, Hendon, London, NW4 4BT, UK.\\
  \url{barnabymartin@gmail.com}\\
}
\maketitle

\begin{abstract}
Let $\mathbb{A}$ be an idempotent algebra on a finite domain. We combine results of Chen \cite{AU-Chen-PGP} and Zhuk \cite{ZhukGap2015} to argue that if $\mathrm{Inv}(\mathbb{A})$ satisfies the polynomially generated powers property (PGP), then QCSP$(\mathrm{Inv}(\mathbb{A}))$ is in NP. We then use the result of Zhuk to prove a converse, that if $\mathrm{Inv}(\mathbb{A})$ satisfies the exponentially generated powers property (EGP), then QCSP$(\mathrm{Inv}(\mathbb{A}))$ is co-NP-hard. Since Zhuk proved that only PGP and EGP are possible, we derive a full dichotomy for the QCSP, justifying the moral correctness of what we term the Chen Conjecture (see \cite{Meditations}).
\end{abstract}

\section{Introduction}

This note is a place-holder giving details of the resolution of a revised form of the Chen Conjecture from \cite{Meditations}. The form we speak of is one involving \emph{infinite signatures} and co-NP-hardness in place of Pspace-hardness. This form is apposite because we have already the tools to solve it. The exposition for the NP membership side of the proof (Theorem~\ref{thm:easy}) is somewhat vague, as it requires quite a bit of machinery from \cite{LICS2015}. In this note we only give details of the simple modification to the proof from that paper, in the form of an additional lemma, that is required for infinite signatures.

The following is the merger of Conjectures 6 and 7 in \cite{Meditations} which we call the \emph{Chen Conjecture}.
\begin{conjecture}[Chen Conjecture]
Let $\mathcal{B}$ be a finite relational structure expanded with all constants. If Pol$(\mathcal{B})$ has PGP, then QCSP$(\mathcal{B})$ is in NP; otherwise QCSP$(\mathcal{B})$ is Pspace-complete.
\end{conjecture}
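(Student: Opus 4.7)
The plan follows Zhuk's PGP/EGP dichotomy applied to the idempotent algebra $\mathbb{A} := \Pol(\mathcal{B})$ (idempotence is automatic since $\mathcal{B}$ contains all constants).

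\emph{Upper bound under PGP.} I would invoke Chen's collapsibility machinery from \cite{AU-Chen-PGP} in the refined form developed in \cite{LICS2015}. The content of PGP is that the diagonal subalgebra of $\mathbb{A}^n$ is generated by only polynomially many tuples. Translated into QCSP, each universal quantifier $\forall y$ in an input instance may be replaced, without changing truth, by a conjunction over a polynomially-sized set of canonical witnesses; what remains is a pp-sentence whose truth can be certified in \NP. The only novelty needed because $\mathcal{B}$ has infinite signature is an additional lemma verifying that, given an instance, the canonical witness set can be computed in polynomial time uniformly in the finitely many relations actually occurring in it, using the bounded arity of each such relation. This is the simple modification alluded to in the introduction.

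\emph{Lower bound under EGP.} By Zhuk's dichotomy the only alternative to PGP is EGP, and the target is full \Pspace-completeness. The strategy is to pp-construct from $\mathcal{B}$ a finite template $\mathcal{D}$ whose QCSP is already known to be \Pspace-complete --- for instance a two-element Boolean template encoding $Q_3$-SAT, whose \Pspace-completeness is classical via a Schaefer-style analysis. For each $n$, EGP furnishes a family of tuples in $B^n$ whose generated subalgebra has exponential size; following Zhuk's reconstruction one can assemble such tuples into a \emph{switching gadget}, that is a pp-formula over $\mathcal{B}$ realising the semantics of a single $\forall\exists$-alternation block. Concatenating polynomially many disjoint switching gadgets, one per alternation block of the input QBF, is the proposed reduction from $Q_3$-SAT to QCSP$(\mathcal{B})$.

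\emph{Main obstacle.} The genuinely hard step is that the switching gadgets must \emph{compose}: under concatenation, variables introduced in earlier blocks must not be collapsed or re-forced into a smaller subalgebra by later blocks, and the EGP witnesses must survive the pp-quantification that glues consecutive blocks together. A single gadget used in isolation is already enough to deliver the \coNP-hardness recorded in the abstract, but \Pspace-hardness requires a global, rather than one-block-at-a-time, exploitation of EGP. The anticipated tool is a stratified analysis of the subpowers of $\mathbb{A}$ in the EGP regime, isolating a minor whose clone fails a sufficiently strong Taylor-style identity (in the Barto--Pinsker sense), and then using that failure to pp-interpret $Q_3$-SAT in one stroke rather than block by block. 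This is precisely the point where the present paper stops short, and it is where any proof of the Chen Conjecture as stated must invest its main technical effort.
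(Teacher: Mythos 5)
The statement you are addressing is a conjecture, and the paper does not prove it: the Discussion section states explicitly that ``The Chen Conjecture in its original form remains open.'' What the paper establishes is a revised version in which the template is the full (infinite-signature) structure $\mathrm{Inv}(\mathbb{A})$ and the hardness half is weakened from \Pspace-completeness to \coNP-hardness. Your PGP/\NP{} direction is essentially sound and matches what is known: Zhuk's dichotomy gives switchability and the machinery of \cite{AU-Chen-PGP,LICS2015} then reduces the QCSP to polynomially many CSP instances. Note, though, that for a finite relational structure no infinite-signature consideration arises at all, and the extra lemma the paper does add for its revised setting (Lemma~\ref{lem:infinite-composibility}) is a compactness statement about canonical sentences for composability, not a claim about computing canonical witness sets uniformly in the relations occurring in an instance.

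The genuine gap is the EGP direction. What you write there is a research programme, not a proof: you correctly name the composition obstacle and then propose, without carrying it out, a ``stratified analysis of the subpowers'' that would pp-interpret $Q_3$-SAT in one stroke. Moreover, the paper's own results show that the specific route you sketch cannot work as stated. The relations that EGP furnishes via Zhuk's Lemma 11 are the $\sigma_k$ (and the derived $\tau_k$), and the paper proves that, at least when $\alpha\cap\beta\neq\emptyset$, the template $(A;\{\tau_k:k\in\mathbb{N}\},a_1,\ldots,a_n)$ has QCSP in \coNP{} --- and that every finite-signature reduct of it even has QCSP in \NL. So a reduction built only from ``switching gadgets'' extracted from the EGP witnesses is provably incapable of establishing \Pspace-hardness: any proof of the original conjecture must exploit relations of $\mathrm{Inv}(\mathbb{A})$ beyond those guaranteed by EGP, and no such argument is known. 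This is precisely why the paper settles for \coNP-hardness and leaves the conjecture open; your proposal does not close that gap.
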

In \cite{Meditations}, Conjecture 6 gives the NP membership and Conjecture 7 the Pspace-completeness. We now know from \cite{ZhukGap2015} and \cite{LICS2015} that the NP membership of Conjecture 6 is indeed true.  

The main result of this paper is the following.
\begin{theorem}[Revised Chen Conjecture]
\label{thm:all}
Let $\mathbb{A}$ be an idempotent algebra on a finite domain $A$. If $\mathrm{Inv}(\mathbb{A})$ satisfies PGP, then QCSP$(\mathrm{Inv}(\mathbb{A}))$ is in NP. Otherwise, QCSP$(\mathrm{Inv}(\mathbb{A}))$ is co-NP-hard.
\end{theorem}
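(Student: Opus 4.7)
The plan is to invoke Zhuk's gap theorem~\cite{ZhukGap2015}, which guarantees that $\Inv(\mathbb{A})$ satisfies exactly one of PGP or EGP; this splits the statement into two independent implications to be handled separately.

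For the first implication---PGP implies membership in $\NP$---the approach is to combine Chen's algorithm from~\cite{AU-Chen-PGP} with the infinite-signature machinery from~\cite{LICS2015}. Chen observed that under PGP every direct power $A^n$ admits a generating set of polynomial size under the componentwise action of the polymorphisms. Consequently, for a QCSP sentence $\forall \bar{x}\, \exists \bar{y}\, \phi(\bar{x},\bar{y})$ it suffices to guess Skolem witnesses for just the polynomially many generator tuples; the polymorphisms then extend the resulting certificate to all universal assignments while preserving $\phi$. This yields a nondeterministic polynomial-time algorithm. The only new technical ingredient required here is an auxiliary lemma ensuring that the polynomial bound survives the passage to infinite signatures, where arbitrarily many distinct relations of unbounded arity may appear in an instance.

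For the second implication---EGP implies $\coNP$-hardness---I would extract from Zhuk's description of EGP a concrete family of ``adversary'' relations whose $n$-th powers genuinely require exponentially many generators, and then use them as gadgets in a reduction from the complement of $3$-SAT. The universal variables of the constructed QCSP instance simulate a Boolean assignment, with the adversary gadgets ensuring that every one of the $2^n$ truth assignments is forced; the existential variables then attempt to produce a falsifying evaluation of the clauses. The QCSP instance holds precisely when no satisfying Boolean assignment exists.

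The main obstacle is unquestionably the second implication: turning Zhuk's abstract exponential lower bound on generator size into explicit primitive positive definitions of the gadgets required for the reduction. The payoff is that Zhuk's analysis provides not merely a cardinality bound but structural data identifying the troublesome relations, and this data is what makes the pp-definitions available in $\Inv(\mathbb{A})$. Once those gadgets are pinned down, the reduction itself is the routine encoding of a Boolean constraint satisfaction problem.
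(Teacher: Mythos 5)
Your top-level skeleton matches the paper's: Zhuk's gap theorem reduces everything to the two implications PGP~$\Rightarrow$~NP and EGP~$\Rightarrow$~co-NP-hard, and the paper indeed proves these as two separate theorems. However, both of your implications have genuine gaps. For the first, the argument ``guess Skolem witnesses for the polynomially many generator tuples and let the polymorphisms extend the certificate'' is exactly the naive argument that the paper explicitly flags as insufficient: it respects only the \emph{value} of the quantifier-free part, not the quantifier prefix, so an existential witness produced by applying a polymorphism may end up depending on universal variables quantified \emph{after} it. This argument (Chen's Proposition 3.3) only yields NP membership for bounded-alternation $\Pi_n$-QCSP. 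The full result requires Chen's original adversary/reactive-composability formulation of switchability, the LICS 2015 machinery reconciling the two definitions of switchability, and additionally a compactness lemma (Lemma 4 in the paper) to push the canonical-sentence argument through for infinite signatures --- your ``auxiliary lemma'' gestures at the last of these but not at the prefix problem, which is the real obstruction.

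For the second implication you correctly identify the extraction of concrete gadgets from EGP as the main obstacle, but you do not overcome it, and the reduction you sketch would not go through as stated. The paper relies on Zhuk's Lemma 11: EGP yields strict subsets $\alpha,\beta$ with $\alpha\cup\beta=A$ such that the disjunctive relations $\sigma_k(x_1,y_1,\ldots,x_k,y_k)=\bigvee_i \rho(x_i,y_i)$ with $\rho=(\alpha\times\alpha)\cup(\beta\times\beta)$ all lie in $\mathrm{Inv}(\mathbb{A})$. Because $\rho$ is \emph{symmetric} in the two sides $\alpha,\beta$, these gadgets naturally encode ``all three equal'' disjuncts, and the paper therefore reduces from the complement of monotone 3NAESAT, not from co-3-SAT: a co-3-SAT reduction needs to treat true and false literals asymmetrically, which $\rho$ cannot do. Moreover the paper must (i) prove that the ternary variant $\tau_k$ is pp-definable from the $\sigma_k$ (a nontrivial counting argument over $3^k$ conjuncts), (ii) verify that $\tau_k$ admits a polynomial-size DNF encoding so the reduction is actually polynomial-time in the infinite-signature setting, and (iii) note that the resulting instance is purely universally quantified --- there is no existential ``falsifying evaluation'' phase as in your sketch. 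None of these steps is routine enough to be left implicit.
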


We are also able to refute the following form.
\begin{conjecture}[Alternative Chen Conjecture]
\label{thm:alternative}
Let $\mathbb{A}$ be an idempotent algebra on a finite domain $A$. If $\mathrm{Inv}(\mathbb{A})$ satisfies PGP, then for every finite signature reduct $\mathcal{B} \subseteq \mathrm{Inv}(\mathbb{A})$, QCSP$(\mathcal{B})$ is in NP. Otherwise, there exists a finite signature reduct $\mathcal{B} \subseteq \mathrm{Inv}(\mathbb{A})$ so that QCSP$(\mathcal{B})$ is co-NP-hard.
\end{conjecture}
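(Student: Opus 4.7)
The plan is to refute Conjecture~\ref{thm:alternative} by attacking its second clause. The first clause is immediate from Theorem~\ref{thm:all}: any QCSP instance over a finite reduct $\mathcal{B} \subseteq \mathrm{Inv}(\mathbb{A})$ is verbatim an instance over $\mathrm{Inv}(\mathbb{A})$, so whenever Pol$(\mathbb{A})$ has PGP, the NP membership of QCSP$(\mathrm{Inv}(\mathbb{A}))$ descends uniformly to every finite reduct. Hence the refutation must produce a witness algebra $\mathbb{A}$ whose $\mathrm{Inv}(\mathbb{A})$ is EGP, yet for which \emph{no} finite reduct $\mathcal{B} \subseteq \mathrm{Inv}(\mathbb{A})$ has a QCSP that is co-NP-hard (assuming $\NP \neq \coNP$).

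The strategy leverages the clone-theoretic fact that polymorphism clones grow as the relational structure shrinks: for any finite $\mathcal{B} \subseteq \mathrm{Inv}(\mathbb{A})$ we have $\mathrm{Pol}(\mathcal{B}) \supseteq \mathrm{Pol}(\mathrm{Inv}(\mathbb{A}))$, and this containment can be strict -- the reduct may acquire a PGP-witnessing operation (for instance a semilattice or weak near-unanimity operation) that fails only on some relation omitted from $\mathcal{B}$. The concrete witness I would target is a small-domain $\mathbb{A}$ (three- or four-element) such that (a) $\mathrm{Inv}(\mathbb{A})$ is EGP because of a countable family of relations $\{R_n\}$ whose joint generation in direct powers requires exponentially many vectors, while (b) for any finite sub-family of $\{R_n\}$ (and any finite choice of other relations), the polymorphism clone of the resulting reduct contains a PGP witness. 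Applied to such a $\mathcal{B}$, Theorem~\ref{thm:all} -- viewing $\mathcal{B}$ as a sub-family of invariants of its own polymorphism clone -- then places QCSP$(\mathcal{B})$ in NP, and in particular not co-NP-hard.

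The main obstacle is delivering the example with full verification of both properties. EGP of $\mathrm{Inv}(\mathbb{A})$ needs an exponential lower bound on generating sets of certain direct powers, while PGP for every finite reduct demands a uniform structural argument rather than case-by-case enumeration. I would look for an $\mathbb{A}$ whose polymorphism structure is ``locally semilattice-like'' on any finite reduct, with the global failure driven by one specific unbounded relational family so that removing any finite portion of that family reinstates PGP. The natural place to mine such an $\mathbb{A}$ is \cite{Meditations}, where Chen explicitly flags the role of infinite signatures and catalogues small-domain algebras that behave asymmetrically between the Inv-level and finite-reduct-level. Producing and verifying this witness refutes the second clause of the Alternative Chen Conjecture, and incidentally confirms that the infinite signature in Theorem~\ref{thm:all} is essential rather than cosmetic.
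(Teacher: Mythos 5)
Your strategic outline is the right one --- the first clause does descend trivially to finite reducts from Theorem~\ref{thm:all}, so the refutation must exhibit an EGP algebra none of whose finite reducts has a co-NP-hard QCSP, and the mechanism is indeed that $\mathrm{Pol}(\mathcal{B}) \supsetneq \mathrm{Pol}(\mathrm{Inv}(\mathbb{A}))$ can pick up a tractability-witnessing operation on each finite reduct. But the proposal stops exactly where the mathematical content begins: you describe what the witness should look like and where one might search for it, without producing or verifying one. That is the entire burden of the refutation, and it is a genuine gap. The paper's witness is already sitting in its hardness proof: take the structure $(A;\{\tau_k : k\in\mathbb{N}\},a_1,\ldots,a_n)$ built from covers $\alpha\cup\beta=A$ by proper subsets, with the one extra hypothesis $\alpha\cap\beta\neq\emptyset$. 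Its polymorphism clone is EGP (the $\tau_k$ are precisely the relations Zhuk's Lemma 11 associates with EGP, and the full structure's QCSP is co-NP-complete by Theorem~\ref{thm:hard} and the first Proposition), yet any finite reduct $\mathcal{B}$ containing only $\tau_i$ for $i\leq n$ admits a $(3n+1)$-ary near-unanimity polymorphism: send every non-near-unanimous argument tuple to a fixed $a\in\alpha\cap\beta$, and observe by a column-wise pigeonhole argument that a violation would force one of the $3n+1$ input tuples of $\tau_n$ to equal the alleged bad output tuple. Chen's collapsibility machinery then puts QCSP$(\mathcal{B})$ in NL, hence not co-NP-hard under standard assumptions.

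Note two points your sketch misses that the concrete example makes essential. First, the near-unanimity arity $3n+1$ grows with the reduct, which is exactly why the tractability does not lift to the full infinite-signature structure; your phrase ``locally semilattice-like'' gestures at this but the unbounded-arity NU family is the actual mechanism. Second, the auxiliary condition $\alpha\cap\beta\neq\emptyset$ is not cosmetic: the NU operation needs a default value lying in both $\alpha$ and $\beta$ so that its non-trivial outputs always land inside every disjunct of $\tau_k$. Without identifying these two ingredients, the claim that ``removing any finite portion of the family reinstates PGP'' remains an unverified hope rather than a proof.
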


\section{Preliminaries}

Let $A$ be a finite domain of size $n$ whose elements are named by constants $a_1,\ldots,a_n$. When we consider structures $\mathcal{A}$ over the set $A$ with an \emph{infinite signature}, \mbox{i.e.} an infinite set of relations, for which we wish to consider the CSP or QCSP, then there comes the question as to how these relations should be encoded. One possibility is to list all the tuples of the relation but this can be extremely lengthy. A more natural possibility is as some quantifier-free formula $\phi$ in the language of equality involving the constants $a_1,\ldots,a_n$. For example, $x \neq y \wedge x = a_1$ defines an $n$-clique in which $a_1$ has a self-loop and all other elements are loopless. If we permit this $\phi$ to be in arbitrary form, even in CNF, then testing non-emptiness of the relation can already be NP-complete, whose great overhead of complexity is rather unsatisfying. Most natural is to insist on DNF where, inter alia, non-emptiness becomes tractable. This approach has been taken, e.g., in \cite{BodDalJournal}. In this paper we will always assume that the relations are encoded in DNF; however our results would not change if we allowed arbitrary formulas since we are interested in the distinction NP versus co-NP-hard and not fine-grained analysis within NP.

Let Pol$(\mathcal{B})$ be the set of polymorphisms of $\mathcal{B}$ and let Inv$(\mathbb{A})$ be the set of relations on $A$ which are invariant under (each of) the operations of $\mathbb{A}$. Pol$(\mathcal{B})$ is an object known in Universal Algebra as a \emph{clone}, which is a set of operations containing all projections and closed under composition (superposition). I will conflate sets of operations over the same domain and algebras just as I do sets of relations over the same domain and constraint languages (relational structures). Indeed, the only technical difference between such objects is the movement away from an ordered signature, which is not something we will ever need.

For a finite-domain algebra $\mathbb{A}$ we associate a function
$f_\mathbb{A}:\mathbb{N}\rightarrow\mathbb{N}$, giving the cardinality
of the minimal generating sets of the sequence $\mathbb{A},
\mathbb{A}^2, \mathbb{A}^3, \ldots$ as $f(1), f(2), f(3), \ldots$,
respectively. We may say $\mathbb{A}$ has the $g$-GP if $f(m) \leq
g(m)$ for all $m$. The question then arises as to the growth rate of
$f$, for example, regarding the behaviours constant, logarithmic,
linear, polynomial and exponential.  We say
$\mathbb{A}$  enjoys the \emph{polynomially generated powers} property
(PGP) if there exists a polynomial $p$ so that $f_{\mathbb{A}}=O(p)$
and  the \emph{exponentially generated powers} property (EGP) if there
exists a constant $b$ so that $f_{\mathbb{A}}=\Omega(g)$ where
$g(i)=b^i$.

For a finite-domain, idempotent algebra $\mathbb{A}$, \emph{$k$-collapsibility} may be seen as a special form of the PGP in which the generating set for $\mathbb{A}^m$ is constituted of all tuples $(x_1,\ldots,x_m)$ in which at least $m-k$ of these elements are equal. \emph{$k$-switchability} may be seen as another special form of the PGP in which the generating set for $\mathbb{A}^m$ is constituted of all tuples $(x_1,\ldots,x_m)$ in which there exists $a_i<\ldots<a_{k'}$, for $k'\leq k$, so that
\[ (x_1,\ldots,x_m) = (x_1,\ldots,x_{a_1},x_{a_1+1},\ldots,x_{a_2},x_{a_2+1},\ldots,\ldots,x_{a_k'},x_{a_k'+1},\ldots,x_m),\]
where $x_1=\ldots=x_{a_1-1}$, $x_{a_1}=\ldots=x_{a_2-1}$, \ldots, $x_{a_{k'}}=\ldots=x_{a_m}$. Thus, $a_1,a_2,\ldots,a_{k'}$ are the indices where the tuple switches value. Note that these are not the original definitions but they are proved equivalent to the original definitions, at least for finite signatures, in \cite{LICS2015}. We say that $\mathbb{A}$ is collapsible (switchable) if there exists $k$ such that it is $k$-collapsible ($k$-switchable). Note that Zhuk uses our definition of switchability in \cite{ZhukGap2015} and so it is the version we will use in this paper.

\section{Results}

The following result is essentially a corollary of the works of Chen and Zhuk \cite{AU-Chen-PGP,ZhukGap2015} via \cite{LICS2015}.
\begin{theorem}
\label{thm:easy}
Let $\mathbb{A}$ be an idempotent algebra on a finite domain $A$. If $\mathrm{Inv}(\mathbb{A})$ satisfies PGP, then QCSP$(\mathrm{Inv}(\mathbb{A}))$ reduces to a polynomial number of intances of CSP$(\mathrm{Inv}(\mathbb{A}))$ and is in NP.
\end{theorem}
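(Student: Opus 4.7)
The plan is to combine Zhuk's gap theorem \cite{ZhukGap2015} with the quantifier-elimination machinery of \cite{LICS2015}, together with one additional technical lemma handling the DNF encoding of relations in an infinite signature. First, I would apply Zhuk's result to the idempotent finite-domain algebra $\mathbb{A}$: since $\mathrm{Inv}(\mathbb{A})$ enjoys PGP, Zhuk's dichotomy forces $\mathbb{A}$ to be $k$-switchable for some constant $k$ depending only on $\mathbb{A}$. This converts the semantic PGP hypothesis into the combinatorial switchability condition on which the LICS2015 reduction is built.

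Second, I would invoke the core reduction of \cite{LICS2015}: for any finite-signature reduct $\mathcal{B} \subseteq \mathrm{Inv}(\mathbb{A})$, a QCSP$(\mathcal{B})$ instance $\Phi$ is equivalent to the conjunction of polynomially many CSP$(\mathcal{B})$ instances obtained by instantiating each universal block of variables of length $m$ by the $O(m^k)$ switchable generators of $A^m$ provided by the definition of $k$-switchability recalled in the Preliminaries. Thus $\Phi$ is a yes instance iff every produced CSP instance is a yes instance, and the total number and size of these CSP instances remain polynomial in $|\Phi|$.

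The main obstacle, and the content of the simple modification promised in the introduction, is lifting this reduction to the infinite-signature setting where relations are encoded in DNF. The essential observation is that any fixed input $\Phi$ mentions only finitely many relations $R_1,\ldots,R_s$ from $\mathrm{Inv}(\mathbb{A})$, so one can restrict to $\mathcal{B}_\Phi = (A; R_1,\ldots,R_s)$; since $\mathrm{Pol}(\mathcal{B}_\Phi) \supseteq \mathbb{A}$, the reduct $\mathcal{B}_\Phi$ is itself $k$-switchable and the LICS2015 reduction applies to $\mathcal{B}_\Phi$. The additional lemma I would state and prove checks that the quantifier-substitution step respects the DNF encoding in polynomial time: substituting a named constant $a_i \in A$ for a universally quantified variable acts on each DNF disjunct as a linear-time rewrite (replacing equalities and disequalities between variables and constants by their truth values, then simplifying), so the blow-up is governed by the polynomial bound on the switchable generators rather than by the possibly large arities of the $R_i$ appearing in $\Phi$.

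Once the reduction is in place, NP membership of QCSP$(\mathrm{Inv}(\mathbb{A}))$ follows routinely: CSP$(\mathrm{Inv}(\mathbb{A}))$ with DNF-encoded constraints is trivially in NP since one guesses an assignment and evaluates each DNF-defined relation in polynomial time, and NP is closed under polynomial-size conjunctions, so a QCSP instance can be certified by guessing satisfying assignments for every produced CSP subinstance simultaneously. The expected main difficulty is the third step, because the original LICS2015 bookkeeping is indexed by the arities of relations in a finite signature; one must check that re-indexing it by the DNF size of each constraint does not introduce any exponential blowup.
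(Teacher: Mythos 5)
Your outer skeleton matches the paper's: Zhuk's Theorem~7 converts PGP into switchability, and then Corollary~38 of \cite{LICS2015} supplies the reduction to polynomially many CSP instances. But there are two genuine gaps. First, your step two describes the reduction as ``instantiating each universal block of variables of length $m$ by the $O(m^k)$ switchable generators of $A^m$.'' This is exactly the naive use of the tuple-based definition of switchability that the paper explicitly warns against: it only shows that the \emph{bounded-alternation} $\Pi_n$-QCSP is in NP (\`a la Proposition~3.3 of \cite{AU-Chen-PGP}). For unbounded alternation the existential witnesses must be produced adaptively, depending only on the universal variables quantified before them, and soundness of restricting the universal player to a polynomial set of plays is what Chen's original machinery of adversaries and \emph{reactive composability} provides. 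The reduction of Corollary~38 runs through the reconciliation of the two definitions of switchability, not through direct substitution of generators, and your proposal never engages with this.

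Second, you have misidentified the ``additional lemma'' needed for the infinite-signature setting. The paper states explicitly that the DNF encoding is immaterial for the NP versus co-NP-hard distinction, so a lemma about polynomial-time rewriting of DNF disjuncts under constant substitution addresses a non-issue. The actual obstacle is that the reconciliation of switchability with reactive composability is certified by satisfaction of a canonical sentence $\phi_{n,\Omega_m,\mathcal{A}}$, which becomes an \emph{infinite} $\Pi_2$ sentence when the signature is infinite. The required lemma is the compactness statement Lemma~\ref{lem:infinite-composibility}: $\mathcal{A} \models \phi_{n,\Omega_m,\mathcal{A}}$ iff $\mathcal{A}^{\sigma'} \models \phi^{\sigma'}_{n,\Omega_m,\mathcal{A}}$ for every finite $\sigma' \subset \sigma$, proved by a pigeonhole argument on the finitely many candidate existential witnesses. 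Your per-instance restriction to the finite reduct $\mathcal{B}_\Phi$ is in the right spirit but does not substitute for this: the composability witnesses obtained for different finite reducts could a priori differ, and one needs a single uniform witness for the full infinite-signature structure, which is precisely what the pigeonhole argument extracts.
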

\begin{proof}
We know from Theorem 7 in \cite{ZhukGap2015} that $\mathbb{A}$ is switchable, whereupon it follows from 
Corollary 38 in \cite{LICS2015} that QCSP$(\mathrm{Inv}(\mathbb{A}))$ reduces to a polynomial number of instances of CSP$(\mathrm{Inv}(\mathbb{A}))$ and is in NP, with the following proviso. The proof of Corollary 38 requires that the signature be finite and an additional lemma is needed to handle an infinite signature, and this is Lemma~\ref{lem:infinite-composibility} given later and discussed further in Section~\ref{sec:added}.
\end{proof}
Note that Chen's original definition of switchability, based on adversaries and reactive composability, plays a key role in the NP membership algorithm in Theorem~\ref{thm:easy}. It is the result from \cite{LICS2015} that is required to reconcile the two definitions of switchability as equivalent, and indeed Lemma~\ref{lem:infinite-composibility} is needed in this process for infinite signatures. If we were to use just our definition of switchability then it is only possible to prove, \`a la Proposition 3.3 in \cite{AU-Chen-PGP}, that the bounded alternation $\Pi_n$-QCSP$(\mathrm{Inv}(\mathbb{A}))$ is in NP.

Suppose there exists $\alpha,\beta$ strict subsets of $A$ so that $\alpha \cup \beta = A$, define the relation $\tau_k(x_1,y_1,z_1\ldots,x_k,y_k,z_k)$ defined by
\[ \tau_k(x_1,y_1,z_1\ldots,x_k,y_k,z_k):=\rho'(x_1,y_1,z_1) \vee \ldots \vee \rho'(x_k,y_k,z_k),\]
where $\rho'(x,y,z)=(\alpha \times \alpha \times \alpha) \cup (\beta \times \beta \times \beta)$.
\begin{theorem}
\label{thm:hard}
Let $\mathbb{A}$ be an idempotent algebra on a finite domain $A$. If $\mathrm{Inv}(\mathbb{A})$ satisfies EGP, then QCSP$(\mathrm{Inv}(\mathbb{A}))$ is in co-NP-hard.
\end{theorem}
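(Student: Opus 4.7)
The plan is to reduce (in polynomial time) the complement of \textsc{Monotone NAE-3-SAT}, which is co-NP-complete, to $\mathrm{QCSP}(\mathrm{Inv}(\mathbb{A}))$, using the relations $\tau_k$ of the statement as the sole gadget. The crucial input from the EGP hypothesis is Theorem~7 of \cite{ZhukGap2015}, which for an idempotent finite-domain algebra characterises EGP by the existence of proper subsets $\alpha, \beta \subsetneq A$ with $\alpha \cup \beta = A$ such that $\rho'(x,y,z) := (\alpha^3) \cup (\beta^3)$ is invariant under $\mathbb{A}$. Hence $\rho' \in \mathrm{Inv}(\mathbb{A})$, and since relations are encoded in DNF, each $\tau_k$ also lies in $\mathrm{Inv}(\mathbb{A})$ with a description of size polynomial in $k$.

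From an instance of \textsc{Monotone NAE-3-SAT} with triples $T_1, \ldots, T_m$ over Boolean variables $v_1, \ldots, v_n$, I would produce the QCSP sentence
\[
\Phi \ := \ \forall u_1 \cdots \forall u_n \ \tau_m\bigl(u_{T_1}, u_{T_2}, \ldots, u_{T_m}\bigr),
\]
where $u_{T_j}$ is the triple of universal variables indexed by $T_j$, and I would claim that $\Phi$ is true exactly when every Boolean assignment to $v_1, \ldots, v_n$ leaves some $T_j$ monochromatic. Correctness uses witnesses $a^* \in \alpha \setminus \beta$ and $b^* \in \beta \setminus \alpha$, which exist because $\alpha, \beta$ are proper and $\alpha \cup \beta = A$. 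To see that $\Phi$ implies the NAE-complement, given a Boolean assignment $v$, set $u_i := a^*$ when $v_i = 0$ and $u_i := b^*$ when $v_i = 1$; a monochromatic $u$-triple then forces the corresponding $v$-triple to be monochromatic, since $a^* \notin \beta$ and $b^* \notin \alpha$. For the converse, given arbitrary $\vec u \in A^n$, set $v_i := 0$ when $u_i \in \alpha$ and $v_i := 1$ otherwise (in which case $u_i \in \beta \setminus \alpha \subseteq \beta$); the hypothesis yields a monochromatic $v$-triple, and the implications $v_i = 0 \Rightarrow u_i \in \alpha$ and $v_i = 1 \Rightarrow u_i \in \beta$ promote it to $\rho'(u_{T_j})$.

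The principal obstacle is the algebraic input from \cite{ZhukGap2015} that places $\rho'$ (in its $3$-ary form, or a $2$-ary form from which the $3$-ary version follows by identifying variables) into $\mathrm{Inv}(\mathbb{A})$; granted this, the reduction is straightforward, with the only delicate point being the handling of elements of $\alpha \cap \beta$, absorbed by the one-sided specialization $a^* \in \alpha \setminus \beta$, $b^* \in \beta \setminus \alpha$ in the first direction of the correctness argument.
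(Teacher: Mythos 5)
Your reduction from the complement of monotone 3NAESAT is essentially the paper's own reduction, and the correctness argument (including the one-sided treatment of $\alpha\cap\beta$ via witnesses $a^*\in\alpha\setminus\beta$, $b^*\in\beta\setminus\alpha$) matches the paper. But there is a genuine gap at the point you dismiss as routine: the claim that ``since relations are encoded in DNF, each $\tau_k$ also lies in $\mathrm{Inv}(\mathbb{A})$.'' The encoding is irrelevant to invariance, and invariance of $\rho'$ alone does not give invariance of $\tau_k$: $\mathrm{Inv}(\mathbb{A})$ is closed under primitive positive definitions (conjunction, existential quantification, equality) but \emph{not} under disjunction, and $\tau_k$ is a $k$-fold disjunction of copies of $\rho'$ on disjoint variable triples. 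That these disjunctions are preserved by $\mathbb{A}$ is precisely the nontrivial consequence of EGP, not a formal consequence of $\rho'\in\mathrm{Inv}(\mathbb{A})$. Without it the whole reduction collapses, since $\tau_k$ is the sole gadget.

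The paper handles this in two steps that your proposal omits. First, what Zhuk's Lemma~11 actually supplies under EGP is that the \emph{binary} disjunctions $\sigma_k(x_1,y_1,\ldots,x_k,y_k)=\rho(x_1,y_1)\vee\cdots\vee\rho(x_k,y_k)$, with $\rho=(\alpha\times\alpha)\cup(\beta\times\beta)$, lie in $\mathrm{Inv}(\mathbb{A})$ for every $k$ --- note the quantification over all $k$, which is the substantive content. (Your citation of Theorem~7 is also off: that is the switchability statement used for the PGP/NP-membership direction.) Second, the paper shows $\tau_k$ is definable as a conjunction of $3^k$ instances of $\sigma_k$, one for each way of choosing two variables out of each triple $(x_i,y_i,z_i)$, with a counting argument establishing that this conjunction entails $\tau_k$; being a conjunction of invariant relations, $\tau_k$ is then invariant. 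Your parenthetical remark that the ternary form ``follows by identifying variables'' from the binary form does not repair this: identification of variables can produce $\rho'$ of arity $3$ from $\rho$ (via $\rho(x,y)\wedge\rho(y,z)\wedge\rho(x,z)$), but it cannot produce the disjunctive family $\tau_k$ from $\rho'$. You need the $\sigma_k$ for all $k$ as the algebraic input, plus the conversion argument.
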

\begin{proof}
We know from Lemma 11 in \cite{ZhukGap2015} that there exist $\alpha,\beta$ strict subsets of $A$ so that $\alpha \cup \beta = A$ and the relation $\sigma_k(x_1,y_1,\ldots,x_k,y_k)$ defined by
\[ \sigma_k(x_1,y_1,\ldots,x_k,y_k):=\rho(x_1,y_1) \vee \ldots \vee \rho(x_k,y_k),\]
where $\rho(x,y)=(\alpha \times \alpha) \cup (\beta \times \beta)$, is in $\mathrm{Inv}(\mathbb{A})$, for each $n \in \mathbb{N}$.

We will first argue that the relation $\tau_k(x_1,y_1,z_1\ldots,x_k,y_k,z_k)$ is in $\mathrm{Inv}(\mathbb{A})$, for each $k \in \mathbb{N}$. For this it is enough to see that $\tau_k$ is definable by the conjunction $\Phi$ of $3^n$ instances of $\sigma_k$ that each consider the ways in which two variables may be chosen from each of the $(x_i,y_i,z_i)$, i.e. $x_i=y_i$ or $y_i=z_i$ or $x_i=z_i$. We need to show that this conjunction $\Phi$ entails $\tau_n$ (the converse is trivial). We will assume for contradiction that $\Phi$ is satisfiable but $\tau_n$ not. In the first instance of $\sigma_n$ of $\Phi$ some atom must be true, and it will be of the form $x_i=y_i$ or $y_i=z_i$ or $x_i=z_i$. Once we have settled on one of these three, $p_i=q_i$, then we immediately satisfy $3^{k-1}$ of the conjunctions of $\Phi$, leaving $2\cdot 3^{k-1}$ unsatisfied. Now, we can not evaluate true any of the others among $\{x_i=y_i, y_i=z_i, x_i=z_i\} \setminus \{p_i=q_i\}$ without contradicting our assumption. Thus we are now down to looking at variables with subscript other than $i$ and in this fashion we have made the space one smaller, in total $k-1$. Now, we will need to evaluate in $\Phi$ some other atom of the form  $x_j=y_j$ or $y_j=z_j$ or $x_j=z_j$, for $j\neq i$. Once we have settled on one of these three then we immediately satisfy $2 \cdot 3^{k-2}$ of the conjunctions remaining of $\Phi$, leaving $2^2 \cdot 3^{k-2}$ still unsatisfied. Iterating this thinking, we arrive at a situation in which $2^k$ clauses are unsatisfied after we have gone through all $k$ subscripts, which is a contradiction. 

We will next argue that $\tau_k$ enjoys a relatively small specification in DNF (at least, polynomial in $k$). We first give such a specification for $\rho'(x,y,z)$.
\[ \rho'(x,y,z):= \bigvee_{a,a',a'' \in \alpha} x=a \wedge y=a' \wedge z=a'' \vee \bigvee_{b,b',b'' \in \beta} x=b \wedge y=b' \wedge z=b''\]
which is constant in size when $A$ is fixed. Now it is clear from the definition that the size of $\tau_n$ is polynomial in $n$.

We will now give a very simple reduction from the complement of (monotone) \emph{$3$-not-all-equal-satisfiability} (3NAESAT) to QCSP$(\mathrm{Inv}(\mathbb{A}))$. 3NAESAT is well-known to be NP-complete \cite{Papa} and our result will follow.

Take an instance $\phi$ of 3NAESAT which is the existential quantification of a conjunction of $k$ atoms $\mathrm{NAE}(x,y,z)$. Thus $\neg \phi$ is the universal quantification of a disjunction of $k$ atoms $x=y=z$. We build our instance $\psi$ of QCSP$(\mathrm{Inv}(\mathbb{A}))$ from $\neg \phi$ by transforming the quantifier-free part $x_1=y_1=z_1 \vee \ldots \vee x_k=y_k=z_k$ to $\tau_k=\rho'(x_1,y_1,z_1) \vee \ldots \vee \rho'(x_k,y_k,z_k)$.

($\neg \phi \in \mathrm{co\mbox{-}3NAESAT}$ implies $\psi \in \mathrm{QCSP}(\mathrm{Inv}(\mathbb{A}))$.) From an assignment to the universal variables $v_1,\ldots,v_m$ of $\psi$ to elements $x_1,\ldots,x_m$ of $A$, consider elements $x'_1,\ldots,x'_m \in \{0,1\}$ according to 
\begin{itemize}
\item $x_i \in \alpha \setminus \beta$ implies $x'_i=0$, 
\item $x_i \in \beta \setminus \alpha$ implies $x'_i=1$, and
\item $x_i \in \alpha \cap \beta$ implies we don't care, so w.l.o.g. say $x'_i=0$.
\end{itemize}
The disjunct that is satisfied in the quantifier-free part of $\neg \phi$ now gives the corresponding disjunct that will be satisfied in $\tau_k$.

($\psi \in \mathrm{QCSP}(\mathrm{Inv}(\mathbb{A}))$ implies $\neg \phi \in \mathrm{co\mbox{-}3NAESAT}$.) From an assignment to the universal variables $v_1,\ldots,v_m$ of $\phi$ to elements $x_1,\ldots,x_m$ of $\{0,1\}$, consider elements $x'_1,\ldots,x'_m \in A$ according to 
\begin{itemize}
\item $x_i=0$ implies $x'_i$ is some arbitrarily chosen element in $\alpha \setminus \beta$, and
\item $x_i=1$ implies $x'_i$ is some arbitrarily chosen element in $\beta \setminus \alpha$.
\end{itemize}
The disjunct that is satisfied in $\tau_k$ now gives the corresponding disjunct that will be satisfied in the quantifier-free part of $\neg \phi$.
\end{proof}
\noindent The demonstration of co-NP-hardness in the previous theorem was inspired by a similar proof in \cite{BodirskyChenSICOMP}.\footnote{Indeed one may say the only new insight for our main result of the Revised Chen Conjecture is the move to an infinite signature together with the acceptance of co-NP-hardness over Pspace-completeness.} 

We note surprisingly that co-NP-hardness in Theorem~\ref{thm:hard} is optimal.
\begin{proposition}
Let $\alpha,\beta$ strict subsets of $A:=\{a_1,\ldots,a_n\}$ so that $\alpha \cup \beta = A$ and $\alpha \cap \beta \neq \emptyset$. Then QCSP$(A;\{\tau_k:k \in \mathbb{N}\},a_,\ldots,a_n)$ is in co-NP.
\end{proposition}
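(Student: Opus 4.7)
The plan is to exhibit a dominant constant strategy for the existential player, reducing any QCSP instance to a $\Pi_1$ sentence that is plainly in co-NP. First, I would fix some element $c \in \alpha \cap \beta$, available by hypothesis. The key invariant is a sort of ``idempotency via $c$'' for $\tau_k$: if a tuple $(x_1,y_1,z_1,\ldots,x_k,y_k,z_k)$ satisfies $\tau_k$, then replacing any subset of its coordinates by $c$ yields a tuple still satisfying $\tau_k$. Indeed, some triple $(x_i,y_i,z_i)$ lies in $\alpha^3$ or $\beta^3$; for coordinates inside that triple, substituting $c$ preserves membership in $\alpha^3$ (respectively $\beta^3$) because $c$ lies in both $\alpha$ and $\beta$, while coordinates outside do not affect that triple.

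Next, I would use this invariant to show that in any QCSP sentence $\Phi = Q_1 v_1 \ldots Q_m v_m\, \bigwedge_i \tau_{k_i}(\ldots)$, whose atoms mention variables $v_1,\ldots,v_m$ together with the named constants $a_1,\ldots,a_n$, the existential player wins iff he wins by the constant strategy of playing $c$ at every existential move, independently of prior universal play. Given any winning strategy $\sigma$, overwrite each existential choice by $c$; along every universal trace the resulting assignment differs from $\sigma$'s output only in positions holding existential variables, now all equal to $c$. By the invariant, applied atom by atom, each $\tau_{k_i}$ is still satisfied. Hence $\Phi$ is equivalent to the $\Pi_1$ sentence
\[ \forall v_{j_1} \cdots \forall v_{j_r}\ \bigwedge_i \tau_{k_i}(\ldots) \]
where $v_{j_1},\ldots,v_{j_r}$ are the universal variables of $\Phi$ and every former existential variable inside the atoms has been replaced by the constant $c$.

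Finally, this $\Pi_1$ sentence is false iff some assignment to $v_{j_1},\ldots,v_{j_r}$ falsifies the body; such an assignment has polynomial bitlength, and given it each $\tau_{k_i}$ atom is evaluated in time polynomial in its DNF description. Hence the complement lies in NP, placing QCSP$(A;\{\tau_k:k\in\mathbb{N}\},a_1,\ldots,a_n)$ in co-NP. The only conceptual ingredient is the ``$c$ is dominant'' invariant, and this is immediate from $\alpha \cap \beta \neq \emptyset$; I anticipate no substantive obstacle beyond verifying carefully that the strategy-replacement really preserves winning under arbitrary quantifier alternation, which is handled uniformly by the atom-wise application of the invariant.
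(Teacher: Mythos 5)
Your core idea --- that any element $c \in \alpha \cap \beta$ gives a dominant constant strategy for the existential player, so that the sentence collapses to a $\Pi_1$ sentence decidable in co-NP --- is exactly the paper's argument: the paper likewise instantiates every existentially quantified variable at an arbitrary $a \in \alpha \cap \beta$ and evaluates the resulting purely universal sentence in co-NP. Your justification of the dominance via the atom-wise invariant (some disjunct $\rho'(x_i,y_i,z_i)$ lands in $\alpha^3$ or $\beta^3$, and $c$ belongs to both $\alpha$ and $\beta$, so overwriting any subset of coordinates by $c$ preserves membership) is correct, and is in fact spelled out more explicitly than in the paper.

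There is, however, one genuine omission. The template is $(A;\{\tau_k:k \in \mathbb{N}\},a_1,\ldots,a_n)$, so an input sentence may contain atoms of the form $v=a_j$ (equivalently, unary singleton constraints coming from the constants in the signature), whereas your reduction silently assumes the quantifier-free part is a conjunction of $\tau_{k_i}$ atoms only. If an existential variable $v$ is subject to an atom $v=a_j$ with $a_j \notin \alpha \cap \beta$, the constant-$c$ strategy violates that atom, and your claimed equivalence between $\Phi$ and the $\Pi_1$ sentence is simply false. The paper devotes the first half of its proof to a preprocessing phase that removes exactly these atoms: a universally quantified variable carrying such an atom makes the sentence false; an existentially quantified one is either discarded (if it occurs in no other relation), detected as inconsistent (if two distinct constants are forced), or eliminated by substituting $a_j$ into all $\tau$-atoms in which it occurs. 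You need this step; after it your argument goes through verbatim on the remaining existential variables, since constants occurring as arguments inside $\tau_{k_i}$ atoms are untouched by the invariant and cause no difficulty.
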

\begin{proof}
Let $\phi$ be an input to QCSP$(A;\{\tau_k:k \in \mathbb{N}\},a_1,\ldots,a_n)$. We will now seek to eliminate atoms $v=a$ ($a \in \{a_1,\ldots,a_n\}$) from $\phi$. Suppose $\phi$ has an atom $v=a$. If $v$ is universally quantified, then $\phi$ is false. Otherwise, either the atom $v=a$ may be eliminated with the variable $v$ since $v$ does not appear in a non-equality relation; or $\phi$ is false because there is another atom $v=a'$ for $a\neq a'$; or $v=a$ may be removed by substitution of $a$ into all non-equality instances of relations involving $v$. This preprocessing procedure is polynomial and we will assume \mbox{w.l.o.g.} that $\phi$ contains no atoms $v=a$. We now argue that $\phi$ is a yes-instance iff $\phi'$ is a yes-instance, where $\phi'$ is built from $\phi$ by instantiating all existentially quantified variables as any $a \in \alpha \cap \beta$. The universal $\phi'$ can be evaluated in co-NP (one may prefer to imagine the complement as an existential $\neg \phi' $ to be evaluated in NP) and the result follows.
\end{proof}
The following, together with our previous results, gives the refutation of the Alternative Chen Conjecture.
\begin{proposition}
Let $\alpha,\beta$ strict subsets of $A:=\{a_1,\ldots,a_n\}$ so that $\alpha \cup \beta = A$ and $\alpha \cap \beta \neq \emptyset$. Then, for each finite signature reduct $\mathcal{B}$ of $(A;\{\tau_k:k \in \mathbb{N}\},a_,\ldots,a_n)$, QCSP$(\mathcal{B})$ is in NL.
\end{proposition}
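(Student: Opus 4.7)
The strategy is to mimic the preprocessing and constant-$c$ substitution of the previous proposition, then exploit that a finite signature reduct forces each $\tau_{k_j}$ to have bounded arity, sharpening the upper bound from co-NP to $\NL$. First I would fix some $c \in \alpha \cap \beta$ (nonempty by assumption) and perform the same equality cleanup: scan the DNF input $\phi$ and reject in $\Logspace$ if some universally quantified variable $v$ occurs in an atom $v=a$, or if some variable $v$ occurs in two atoms $v=a$ and $v=a'$ with $a \neq a'$. What remains is a formula in which each variable is either unconstrained by equalities or pinned to a single constant.

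Next I would establish that $\phi$ is equivalent to the universal formula $\phi'$ obtained by substituting $c$ for every unpinned existential variable. The easy direction is that if $\phi'$ is true then the existential player may always play $c$. For the converse, fix a universal assignment and a witness $\vec{v}^*$ for the existentials: in each conjunct $\tau_{k_j}(\cdot) = \rho'(x_1,y_1,z_1) \vee \cdots \vee \rho'(x_{k_j},y_{k_j},z_{k_j})$ some disjunct $\rho'(x_i,y_i,z_i)$ must have its triple in $\alpha^3$ or $\beta^3$; since $c \in \alpha \cap \beta$, replacing any existential coordinate of that triple by $c$ leaves it in the same cube, so the disjunct survives under the substitution and $\phi'$ holds universally.

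Finally I would place $\phi'$ into $\NL$ by deciding its complement and invoking Immerman--Szelepcs\'enyi. Since $\mathcal{B}$ is a finite signature reduct, there is a constant upper bound $C$ on the arity of any $\tau_{k_j}$, so each conjunct touches at most $C$ variables. To refute $\forall \vec{u}\, \psi(\vec{u}, \vec{c})$ I nondeterministically guess the index $j$ of one conjunct to falsify (in $O(\log |\phi|)$ bits), then guess values in $A$ for the at most $C$ universal variables occurring there (in $O(1)$ bits), and verify in constant workspace that each of the $k_j$ disjuncts $\rho'$ is false under the resulting partial assignment. This complement check sits in $\NL$, so QCSP$(\mathcal{B})$ itself lies in $\NL$. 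The only real obstacle is the substitution step, which is essentially the polymorphism-style argument already used for the previous proposition; the rest is an arity-and-size accounting that the finite reduct makes available.
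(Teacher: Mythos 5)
Your proof is correct, but it takes a genuinely different route from the paper. The paper argues algebraically: it shows that $\mathcal{B}$ admits a $(3n+1)$-ary near-unanimity polymorphism (sending all non-near-unanimous tuples to an element of $\alpha\cap\beta$), and then invokes Chen's collapsibility machinery to reduce QCSP$(\mathcal{B})$ to a polynomially bounded ensemble of CSP$(\mathcal{B})$ instances, with the NL bound coming from the known consequences of near-unanimity polymorphisms for the CSP. You instead argue syntactically: after the same equality preprocessing as in the preceding proposition, you substitute a fixed $c\in\alpha\cap\beta$ for the existential variables (your justification of this step is sound, and is essentially the same cube-preservation argument the paper uses for its co-NP upper bound), and then you observe that the resulting purely universal sentence can be refuted conjunct-by-conjunct, where the finite signature caps the number of variables per conjunct at a constant. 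This is exactly the right place to use the finite-reduct hypothesis, and it makes transparent why the bound drops from co-NP to NL. Your approach is more elementary and self-contained, avoiding the citation of Chen's collapsibility theorem and the near-unanimity-to-NL pipeline; it in fact proves slightly more, since the complement check is a deterministic exhaustive search over polynomially many conjuncts and constantly many assignments each, placing the problem in L (so the appeal to Immerman--Szelepcs\'enyi is unnecessary; closure of L under complement suffices). What the paper's route buys in exchange is brevity and an algebraic explanation (the near-unanimity polymorphism) that situates the result within the polymorphism framework used throughout. One small point of care in your write-up: when falsifying a single conjunct you assign only the universal variables occurring in it, and you should note explicitly that this partial assignment extends arbitrarily to the remaining universal variables since only one conjunct needs to fail; this is true but worth saying.
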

\begin{proof}
We will assume $\mathcal{B}$ contains all constants (since we prove this case gives a QCSP in NL, it naturally follows that the same holds without constants). Take $n$ so that, for each $\tau_i \in \mathcal{B}$, $i\leq n$. We claim $\mathcal{B}$ admits a $(3n+1)$-ary near-unanimity operation $f$ as a polymorphism, where all tuples whose map is not automatically defined by the near-unanimity criterion map to some arbitrary $a \in \alpha \cap \beta$. To see this, imagine that this $f$ were not a polymorphism. Then some $(3n+1)$ tuples in $\tau_n$ would be mapped to some tuple not in $\tau_n$ which must be a tuple $\overline{t}$ of elements from $\alpha \setminus \beta \cup \beta \setminus \alpha$. Note that column-wise this map may only come from $(3n+1)$-tuples that have $3n$ instances of the same element. By the pigeonhole principle, the tuple $\overline{t}$ must appear as one of the $(3n+1)$ tuples in $\tau_n$ and this is clearly a contradiction.

It follows from \cite{hubie-sicomp} that QCSP$(\mathcal{B})$ reduces to a polynomially bounded ensemble of instances CSP$(\mathcal{B})$, and the result follows.
\end{proof}

\section{Canonical sentences on infinite signatures}
\label{sec:added}

The reader will need to refresh their knowledge of the terminology from \cite{LICS2015} or better the longer arxiv version \cite{LICS2015-arxiv} to which we will refer in this section. If $\mathcal{A}$ is a $\sigma$-structure and $\sigma' \subseteq \sigma$, then let $\mathcal{A}^{\sigma'}$ be the $\sigma'$ reduct of $\mathcal{A}$.

A \emph{canonical sentence for composability for arbitrary pH-sentences} with $m$ universal variables
may be constructed similarly to the canonical sentence for the $\Pi_2$ case, except that it will have $m.n$ universal variables, which we view as $m$ blocks of $n$ variables, where $n$ is the number of elements of the structure $\mathcal{A}$.
Let $\mathscr{O}$ be some adversary of length $m$.
Let $\sigma^{(n\cdot m)}$ be the signature $\sigma$ expanded with a sequence of $n.m$ constants $c_{1,1},\ldots, c_{n,1}, c_{1,2} \ldots, c_{n,2}, \ldots c_{1,m} \ldots, c_{n,m}$. We say that a map $\mu$ from $[n]\times [m]$ to $A$ is \emph{consistent} with $\mathscr{O}$ iff for every $(i_1,i_2,\ldots,i_m)$ in $[n]^m$, the tuple $(\mu(i_1,1),\mu(i_2,2),\ldots,\mu(i_m,m))$ belongs to the adversary $\mathscr{O}$. We write $A^{[n.m]}_{\restrict \mathscr{O}}$ for the set of such consistent maps.
For some set $\Omega_m$ of adversaries of length $m$, we consider the following $\sigma^{(n.m)}$-structure: 
$$\bigotimes_{\mathscr{O}\in \Omega_m} \bigotimes_{\mu \in A^{[n.m]}_{\restrict \mathscr{O}}} \mathfrak{A}_{\mathscr{O},\mu}$$
where the $\sigma^{(n\cdot m)}$-structure $\mathfrak{A}_{\mathscr{O},\mu}$ denotes the expansion of $\mathcal{A}$ by $n.m$ constants as given by the map $\mu$.
Let $\phi_{n,\Omega_m,\mathcal{A}}$ be the \textbf{infinite} $\Pi_2$-pH-sentence created from the canonical query of the $\sigma$-reduct of this $\sigma^{(n.m)}$ product structure with the $n.m$ constants $c_{ij}$ becoming variables $w_{ij}$, universally quantified outermost.
This sentence is
not well defined if constants are not pairwise distinct, which occurs precisely
for degenerate adversaries. 

Now, for each finite subset $\sigma' \subset \sigma$ we are also interested in the similar  $\sigma'^{(n.m)}$-structure: 
$$\bigotimes_{\mathscr{O}\in \Omega_m} \bigotimes_{\mu \in A^{[n.m]}_{\restrict \mathscr{O}}} \mathfrak{A}^{\sigma'}_{\mathscr{O},\mu}$$
where the $\sigma'^{(n\cdot m)}$-structure $\mathfrak{A}^{\sigma'}_{\mathscr{O},\mu}$ denotes the expansion of $\mathcal{A}$, restricted to the signature $\sigma'$, by $n.m$ constants as given by the map $\mu$.
Let $\phi^{\sigma'}_{n,\Omega_m,\mathcal{A}}$ be the \textbf{finite} $\Pi_2$-pH-sentence created from the canonical query of the $\sigma'$-reduct of this $\sigma'^{(n.m)}$ product structure with the $n.m$ constants $c_{ij}$ becoming variables $w_{ij}$, universally quantified outermost.

The following lemma gives a form of compactness.
\begin{lemma}
\label{lem:infinite-composibility}
Let $\mathcal{A}$ be a finite-domain structure with domain size $n$ on an infinite signature $\sigma$. For each $m$, $\mathcal{A} \models \phi_{n,\Omega_m,\mathcal{A}}$ iff for every finite subset $\sigma' \subset \sigma$, $\mathcal{A}^{\sigma'} \models \phi^{\sigma'}_{n,\Omega_m,\mathcal{A}}$.
\end{lemma}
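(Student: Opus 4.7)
The plan is to prove the biconditional in two directions, noting that the forward direction is essentially immediate while the backward direction is a compactness argument exploiting the fact that $A$, the list of universal variables, and the list of existential variables are all finite.

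First I would observe that both the infinite product structure (over $\sigma$) and each finite reduct product structure (over $\sigma'$) share the same universe, namely the Cartesian product of copies of $A$, one for each factor $\mathfrak{A}_{\mathscr{O},\mu}$. Since $A$ is finite, $\Omega_m$ is contained in the (finite) power set of $A^m$, and each $A^{[n.m]}_{\restrict \mathscr{O}}$ is a subset of $A^{nm}$, so the number of factors, and hence the cardinality of the product universe, is finite. Consequently $\phi_{n,\Omega_m,\mathcal{A}}$ and $\phi^{\sigma'}_{n,\Omega_m,\mathcal{A}}$ carry the same finite list of universal variables $w_{ij}$ and the same finite list of existential variables, and differ only in the atoms appearing in the quantifier-free matrix: the infinite sentence uses all $R \in \sigma$ whereas the finite sentence uses only $R \in \sigma'$.

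For the forward direction, suppose $\mathcal{A} \models \phi_{n,\Omega_m,\mathcal{A}}$. Given any assignment $\bar{a}$ to the universals $w_{ij}$, there exist values $\bar{b}$ for the existentials satisfying every atom of the infinite matrix. Since the $\sigma'$-matrix is a sub-conjunction of the $\sigma$-matrix, the same $\bar{b}$ witnesses satisfaction of $\phi^{\sigma'}_{n,\Omega_m,\mathcal{A}}$ at $\bar{a}$ in $\mathcal{A}^{\sigma'}$.

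For the backward direction I would argue by finite intersection. Fix a universal assignment $\bar{a}$; there are only finitely many. For each finite $\sigma' \subset \sigma$, let $S_{\sigma'}(\bar{a})$ denote the set of existential assignments satisfying the $\sigma'$-matrix at $\bar{a}$. Each $S_{\sigma'}(\bar{a})$ is non-empty by hypothesis, and the family is closed under finite intersection since $S_{\sigma'_1}(\bar{a}) \cap S_{\sigma'_2}(\bar{a}) = S_{\sigma'_1 \cup \sigma'_2}(\bar{a})$ and the union of two finite subsets of $\sigma$ is still finite. Because all the $S_{\sigma'}(\bar{a})$ sit inside a common finite set $A^E$ (where $E$ is the finite set of existential variables), this directed family of non-empty subsets of a finite set has non-empty intersection, supplying a single witness $\bar{b}$ satisfying every atom of the infinite matrix at $\bar{a}$. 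Letting $\bar{a}$ range over the finitely many universal assignments yields $\mathcal{A} \models \phi_{n,\Omega_m,\mathcal{A}}$. The main obstacle I anticipate is purely bookkeeping, namely checking that the two canonical-query constructions really do produce identical variable lists; once that alignment is confirmed the compactness step is essentially automatic because everything in sight is finite.
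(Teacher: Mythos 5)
Your proof is correct and follows essentially the same route as the paper: both arguments exploit that the existential witnesses range over a fixed finite set $A^r$, so among the infinitely many finite sub-signatures a single witness tuple must serve all of them, and hence satisfy the full infinite matrix. The paper phrases this as a pigeonhole step (``some tuple must appear infinitely often among the $a^{\sigma'}_1,\ldots,a^{\sigma'}_r$''), whereas your finite-intersection formulation via the downward-directed family $S_{\sigma'}(\bar{a})$ is, if anything, slightly more careful, since it makes explicit why the common witness handles the atoms for \emph{every} relation symbol rather than merely for infinitely many of the finite sub-signatures.
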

\begin{proof}
The forward direction is trivial and we will prove the backwards direction by contradiction. Suppose there is an assignment of $a_{ij}$ to the universal variables $w_{ij}$ that witnesses the falsehood of $\phi_{n,\Omega_m,\mathcal{A}}$ on $\mathcal{A}$. Suppose this assignment can be extended for truth however to $a^{\sigma'}_1,\ldots,a^{\sigma'}_r$ on the existential variables of $\phi^{\sigma'}_{n,\Omega_m,\mathcal{A}}$ for each $\mathcal{A}^{\sigma'}$. Now, some tuple $(a_1,\ldots,a_r)$ must appear infinitely often among the $a^{\sigma'}_1,\ldots,a^{\sigma'}_r$ but it would necessarily witness truth for $\phi_{n,\Omega_m,\mathcal{A}}$ on $\mathcal{A}$, and this is a contradiction.
\end{proof}
We need Lemma~\ref{lem:infinite-composibility} in order to derive the version of Lemma 34 in \cite{LICS2015-arxiv} for infinite signatures. Note that Corollary 38 in \cite{LICS2015,LICS2015-arxiv} does not mention switchability by name, but switchability is an example of an effective, projective and polynomially-bounded adversary.

\section{Discussion}

Note that the aggregation of Theorems~\ref{thm:easy} and \ref{thm:hard}, from which Theorem~\ref{thm:all} follows, actually says something stronger. It shows that the finite QCSP classification, at least in the idempotent cases (when all constants are present), now reduces to the hitherto unknown CSP classification. That is, either we know  $\mathrm{QCSP}(\mathrm{Inv}(\mathbb{A}))$ is co-NP-hard, or we need to read its complexity from that of $\mathrm{CSP}(\mathrm{Inv}(\mathbb{A}))$. Conversely, it is well-known that the CSP classification embeds in the QCSP classification. Thus, the remaining work for the QCSP classification is precisely that for the CSP classification. The only proviso here is that we are dealing with infinite signatures, something unusual in the literature.

A similar reduction from the Valued CSP classification to the CSP classification was recently achieved, in a series of papers culminating in \cite{FOCS2015}. 

The Chen Conjecture in its original form remains open.

\bibliographystyle{acm}

\end{document}